\documentclass[11pt,letterpaper]{article}

\usepackage[margin=1in]{geometry}
\usepackage{theorem,latexsym,graphicx,amssymb}
\usepackage{amsmath,enumerate}
\usepackage{bbm}
\usepackage{subfigure}
\usepackage{float}
\usepackage{epsfig}
\usepackage{xspace}
\usepackage{paralist}
\usepackage{enumerate}
\usepackage{cases}
\usepackage{caption}
\usepackage{multicol}
\usepackage{graphicx}
\usepackage{xcolor}
\usepackage{tikz}
\usepackage{ifthen}
\usepackage{algorithm}
\usepackage[noend]{algpseudocode}

\usepackage{nicefrac}
\usepackage{xfrac}

\newenvironment{proof}{{\bf Proof:  }}{\hfill\rule{2mm}{2mm}\vspace*{5pt}}

\newtheorem{theorem}{Theorem}[section]
\newtheorem{lemma}{Lemma}[section]
\newtheorem{claim}{Claim}[section]
\newtheorem{fact}{Fact}[section]

\newcommand{\vect}[1]{\ensuremath{\vec{#1}}}

\newcommand{\alg}{\mathrm{ALG}}

\newcommand{\dd}[1]{\mathrm{d}#1}

\newcommand{\eps}{\varepsilon}

\usepackage{tcolorbox}

\title{Optimal Competitive Ratio of Two-sided Online Bipartite Matching}

\author{
Zhihao Gavin Tang
\thanks{Key Laboratory of Interdisciplinary Research of Computation and Economics, Shanghai University of Finance and Economics (\texttt{tang.zhihao@mail.shufe.edu.cn})}
}

\date{}

\begin{document}

\maketitle

\begin{abstract}
We establish an optimal upper bound (negative result) of $\sim 0.526$ on the competitive ratio of the fractional version of online bipartite matching with two-sided vertex arrivals, matching the lower bound (positive result) achieved by Wang and Wong (ICALP 2015), and Tang and Zhang (EC 2024).
\end{abstract}

\section{Introduction}
\label{sec:intro}
Since the seminal work of Karp, Vazirani, and Vazirani~\cite{stoc/KarpVV90}, online matching has been one of the most important and productive research areas in online algorithms. Notably, it has found an unexpected yet impactful application in online advertising~\cite{jacm/MehtaSVV07}.

The original model is now referred to as online bipartite matching with \emph{one-sided} vertex arrivals, since it assumes that one side of the underlying bipartite graph is known upfront, while the other side arrives online. Later, Wang and Wong~\cite{icalp/WangW15} generalized the model to \emph{two-sided} vertex arrivals, and Huang et al.~\cite{jacm/HuangKTWZZ20} further introduced the fully online model. These two models allow all vertices of the graph to arrive online and thereby extend the applicability of online matching theory to domains such as ride-hailing and ride-sharing.

In this work, we study the two-sided online bipartite matching problem. All vertices of an underlying bipartite graph $G=(L\cup R,E)$ are initially unknown and arrive online in sequence. Upon the arrival of a vertex, the edges between it and previously arrived vertices are revealed. The algorithm may either match it to an unmatched neighbor or leave it unmatched for future consideration. The goal is to maximize the size of the resulting matching.
The performance of an algorithm is measured via competitive analysis: an algorithm is said to be $\Gamma$-competitive if, for every instance, the size of the matching it produces is at least a $\Gamma$ fraction of the offline maximum matching.

The one-sided model is a special case of the two-sided model in which all vertices in $R$ arrive after those in $L$. Moreover, the two-sided model generalizes naturally to non-bipartite graphs, where it is referred to as online matching with general vertex arrivals.

We focus on the fractional variant of the problem, in which edges may be matched fractionally, subject to the constraint that the total matched value incident to any vertex does not exceed one. Wang and Wong~\cite{icalp/WangW15} designed a dual-based algorithm with competitive ratio $0.526$ for the fractional problem. Later, Tang and Zhang~\cite{ec/TangZ24} proposed a primal-based algorithm attaining the same ratio. On the negative side, Wang and Wong proved an impossibility bound of $0.625$, separating the problem from the one-sided model; subsequent works~\cite{algorithmica/BuchbinderST19,ec/TangZ24} improved this bound to $0.592$ and $0.584$, respectively.

To summarize, prior to this work, the optimal competitive ratio was known to lie between $0.526$ and $0.584$.
We close this gap by establishing a tight upper bound of $0.526$ on the competitive ratio.
\begin{theorem}
\label{thm:main}
	No algorithm is $\Gamma^*+\Omega(1)$ competitive for the two-sided fractional online bipartite matching problem, where 
\[
\Gamma^* := \max_{k\ge 1} \frac{1}{\left( \frac{k+1}{2}\right)^{\frac{k+1}{2k}} \left( \frac{k-1}{2}\right)^{\frac{k-1}{2k}}+1} \approx 0.526~.
\]
\end{theorem}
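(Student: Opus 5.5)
Our plan is to build an adversarial family of instances, parameterized by $k$ and a large size parameter $n$, whose structure reproduces the differential inequalities that Wang--Wong's and Tang--Zhang's analyses show to be tight. We will then argue that any fractional algorithm's gain is governed by an ODE whose optimal value tends to $\Gamma^*$ as $n\to\infty$.

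\textbf{Construction.} We use a recursive ``upper-triangular'' arrival pattern, as in the one-sided $1-1/e$ hardness, but exploit two-sided arrivals. Vertices arrive in $n$ rounds. In round $i$ a new vertex $u_i$ arrives, adjacent to every currently ``active'' earlier vertex. Immediately afterwards the adversary observes the fractional matching and decides which vertices to deactivate, namely those whose matched level is highest. A deactivated vertex's future edges are never revealed. At the end, the adversary appends a final batch of fresh vertices, each adjacent to exactly one previously arrived vertex. This final batch has two effects. It makes the offline optimum essentially perfect, close to $n$. It also punishes every vertex that the algorithm left partially unmatched, since such a vertex's optimal partner arrives too late for the algorithm to exploit. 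By symmetry, and using the standard averaging argument over permutations of indistinguishable active vertices, we may assume without loss of generality that the algorithm treats active vertices symmetrically.

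\textbf{Reduction to a continuous problem.} Let $x(t)$ denote the common matched level of active vertices at time $t=i/n$, and let $a(t)$ denote the fraction of vertices still active. As $n\to\infty$ the algorithm's choices become a control problem. The new vertex's water-filling rate is limited by the unmatched mass of its neighbors, which gives a constraint of the form $\dot{}(\text{mass}) \le (1-x)$ per unit of neighbors. The algorithm's total gain is bounded by $\int$ of the matching added plus the terminal term. The offline optimum is about $n$ (normalized to $1$). Choosing the deactivation threshold so that $k$ enters as the ratio between the growth rate of the active set and the rate at which fresh vertices arrive, we expect the adversary's optimal response to make the ratio equal to
\[
\frac{1}{\left(\frac{k+1}{2}\right)^{\frac{k+1}{2k}}\left(\frac{k-1}{2}\right)^{\frac{k-1}{2k}}+1}.
\]
The powers $\frac{k\pm1}{2k}$ strongly suggest solving a linear ODE $\dot y = \frac{c}{t}y$ with exponents $(k\pm1)/2k$ and integrating between two breakpoints. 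The adversary then takes $k$ to be the maximizer in the statement of Theorem~\ref{thm:main}, which minimizes the algorithm's ratio.

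\textbf{Main obstacle.} The hard step is proving that no algorithm, including ones that deliberately leave mass unmatched to hedge against future arrivals, can beat the ODE value. We will handle this with a factor-revealing or dual argument. We write the algorithm's gain as a functional of $x(\cdot)$ and exhibit explicit multipliers, namely the dual of Wang--Wong's algorithm evaluated at the extremal $k$. These multipliers certify that every feasible control yields at most $\Gamma^*+o(1)$. Finally, discretization errors of order $O(1/n)$ vanish, which gives the $\Gamma^*+\Omega(1)$ impossibility.
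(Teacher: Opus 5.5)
\paragraph{Gap 1: the construction lacks the mechanism behind $\Gamma^*$.} As described, your construction does not even constrain the algorithm. First, nothing keeps the graph bipartite: if each $u_i$ is joined to all active earlier vertices and earlier $u_j$'s stay active, then $u_1,u_2,u_3$ form a triangle. If they do not stay active, the new vertices never face the game again. Second, a pendant neighbour does not punish a vertex left partially unmatched. It lets the algorithm fill exactly that unused mass $1-x_v$. So if the terminal batch always reaches the under-matched vertices, an algorithm that matches nothing beforehand and then matches every pendant loses nothing there.

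Under-matching is punished only by \emph{withholding} future neighbours, and over-matching only by \emph{sending} them. The adversary therefore has to choose between the two threats adaptively, at every step. That is the paper's construction. A batch $B$ with $|B|=\eps|A|$ is completely joined to an active class $A$. After seeing the matched amount $a$, the adversary does one of two things:
\begin{enumerate}[(i)]
\item keep $A$ and $B$ both active as separate classes, each later receiving its own continuation, so the $A$--$B$ edges are wasted; or
\item close off $B$ together with the $\eps|A|$ least matched vertices of $A$.
\end{enumerate}
Lemma~\ref{lem:construction} turns this into the recursion defining $F_{n+1}$. Its term $\eps F_n(a)$ says the new batch becomes a fresh copy of the whole game started at level $a$. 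This self-similarity is what produces Tang--Zhang's condition $a(x)(1-g(x))\ge\int_0^{a(x)}g(y)\,\dd y$. A one-pass upper-triangular schedule with a fixed ending has no analogue of it, so there is no reason its control problem should have value $\Gamma^*$.

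\paragraph{Gap 2: the role of $k$ and the certificate are both inverted.} Write $f(k)$ for the expression inside the maximum, so $\Gamma^*=\max_k f(k)$. Then $f(k)\to 0$ as $k\to\infty$. If your adversary could choose $k$ and force ratio $f(k)$, it would send $k\to\infty$ and force ratio $0$. That contradicts the $\Gamma^*$-competitive algorithms of Wang--Wong and Tang--Zhang. In the paper nobody chooses $k$. It equals $\sqrt{\tilde\Gamma^2+4c(1-\tilde\Gamma)}/\tilde\Gamma$ with $c=H(0)$, a free boundary value of whatever $G$ satisfies the inequalities, and the $\max_k$ is the supremum over all such algorithm-side certificates.

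Likewise, Wang--Wong's dual solution certifies that \emph{their} algorithm achieves at least $\Gamma^*$. It cannot show that every algorithm achieves at most $\Gamma^*+o(1)$.

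The step you defer as the main obstacle is the paper's real content. The limit $F=\lim_n F_n$ exists because $F_n$ is non-increasing in $n$, concave and $\tfrac12$-Lipschitz (Claims~\ref{cl:monotone}--\ref{cl:lipschitz}). From it one extracts $a$, $g$ and $G(x)=\tfrac x2-F(x)$ satisfying both Tang--Zhang inequalities with $\tilde\Gamma=\Gamma-\eps$. Lemma~\ref{lem:optimal} then shows these force $\tilde\Gamma\le\Gamma^*$:
\begin{enumerate}[(i)]
\item pass to $H=G^{-1}$ and derive $H'(y)\ge H(\tilde\Gamma-y)/\bigl(H(\tilde\Gamma-y)-(\tilde\Gamma-y)\bigr)$;
\item make this tight by a monotone iteration;
\item integrate the identity $\bigl((H(y)-y)(H(\tilde\Gamma-y)-(\tilde\Gamma-y))\bigr)'=\tilde\Gamma-2y$ explicitly.
\end{enumerate}
Nothing in your proposal substitutes for this optimality argument.
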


For the integral version of the problem, less progress has been made. The only non-trivial algorithm is due to Gamlath et al.~\cite{focs/GamlathKMSW19}, achieving a competitive ratio of $0.5+\Omega(1)$. We remark that impossibility results against fractional algorithms automatically extend to randomized integral algorithms. Thus, our result shows that the room for improvement is small.

We compare the two-sided arrival model with the fully online model. In the fully online matching problem, vertices arrive and depart online. Upon the arrival of a vertex, its incident edges to existing vertices are revealed. This is identical to the setup in the two-sided model. However, in the fully online model, matching decisions can be deferred until the departure of a vertex, which represents its final opportunity to be matched. The two-sided model is not easier than the fully online model, as any feasible algorithm for the former can be directly applied to the latter, achieving the same competitive ratio. Although prior work has established separation results between these models, our findings underscore the substantial loss incurred when matching decisions must be made upon arrival rather than at departure. Notably, the fully online model admits a $0.6$-competitive fractional algorithm~\cite{soda/HuangPTTWZ19,focs/HuangTWZ20,ec/TangZ24}, as well as $0.569$-competitive~\cite{focs/HuangTWZ20} and $0.521$-competitive~\cite{jacm/HuangKTWZZ20} integral algorithms for bipartite and general graphs, respectively.

\subsection{Related Work}
There is a huge literature in online matching, refer to the impactful survey by Mehta~\cite{fttcs/Mehta13}, and a more recent one by Huang, Tang, and Wajc~\cite{ec/HuangTW24}.

Despite the fact that fractional algorithms are stronger than integral ones, fractional algorithms have immediate algorithmic implications when every vertex has a large capacity, e.g., the $b$-matching setting~\cite{tcs/KalyanasundaramP00} in one-sided arrival model, and the small-bid setting~\cite{jacm/MehtaSVV07} in AdWords. Furthermore, researchers have studied explicitly how to round fractional algorithms in online matching~\cite{soda/CohenW18,soda/BuchbinderNW23,stoc/TangWW22}.

The two-sided/general vertex arrivals model has also been applied in the prophet setting~\cite{mor/EzraFGT22} and the secretary setting~\cite{ec/EzraFGT22}. In stark contrast to our adversarial setting, the general vertex arrival model admits the same optimal $0.5$ competitive ratio in the prophet setting as in the one-sided model~\cite{soda/FeldmanGL15}, and admits a larger competitive ratio of $\nicefrac{5}{12}$ in the secretary setting than the optimal $\nicefrac{1}{e}$ ratio in the one-sided model~\cite{esa/KesselheimRTV13}.

\section{Preliminaries}
\label{sec:prelim}
We are interested in the fractional version of online bipartite matching with two-sided vertex arrivals.
Fix an arbitrary algorithm, we use $x_{uv} \in [0,1]$ to denote the fractional matching selected by the algorithm of each edge $(u,v)$, and use $x_v = \sum_{(u,v)\in E} x_{uv}$ to denote the total matched portion of each vertex $v$.

Recall that the only restriction on an algorithm is $x_v \le 1$ for each vertex $v$. It suffices to study deterministic fractional algorithms, since any randomized one can be converted by assigning each edge $(u,v)$ its expected matched portion under the randomized algorithm. The resulting deterministic algorithm remains feasible and yields the same expected matching size.
\paragraph{Intrinsic Difficulty of Two-sided Arrivals.}
In both the one-sided model and the fully online model, an important feature is that the algorithm is informed when all edges of a vertex have arrived. Specifically, upon the arrival of an online vertex in the one-sided arrival model, or upon the departure of a vertex in the fully online model, this information becomes available. It is then without loss of optimality for the algorithm to match this vertex greedily.

By contrast, the fundamental difficulty of the two-sided arrival model lies in the absence of such information. As a simple toy example, consider a single edge between the first two vertices. If this is the only edge in the graph, the algorithm should match it. However, another possibility is that two additional vertices arrive later, each connecting to one of the first two vertices. It is straightforward to verify that this instance yields an upper bound of $\tfrac{2}{3}$ on the competitive ratio.

Notice, moreover, that upon the arrivals of the third and fourth vertices, the algorithm again faces the same uncertainty as in the first step. This situation can continue to repeat. Our construction in Section~\ref{sec:construction} captures and magnifies this difficulty to the extreme.

\paragraph{Tang and Zhang's Algorithm~\cite{ec/TangZ24}.}
As mentioned in the introduction, two existing fractional algorithms achieve the same $\Gamma^*$ competitive ratio. Here we present the algorithm of Tang and Zhang~\cite{ec/TangZ24}, since an optimization problem they solved also appears in our analysis, which causes our upper bound on the competitive ratio to coincide with their algorithmic result.

Their algorithm is specified by a non-increasing function $a: [0,1] \to [0,1]$, and operates as follows:
\begin{itemize}
\item Upon the arrival of a vertex $u$, fractionally and continuously match it to the least matched neighbor $v$ until $x_u = a(x_v)$ or $x_v = 1$, where $a$ is the prefixed non-increasing function.
\end{itemize}

Intuitively, the function $a$ is optimized to balance the two cases discussed earlier. Their primal–dual analysis introduces an auxiliary function $g$. We restate their result in a form convenient for our purposes and omit the proof, which is not essential for our work.

\begin{fact}
\label{fact:tz}
If there exist a function $a:[0,1] \to [0,1]$, a non-decreasing function $g:[0,1]\to [0,1]$, and a constant $\Gamma$ such that:
\begin{itemize}
\item $a(x) (1-g(x)) \ge \int_0^{a(x)} g(y) \dd y$,
\item $a(x) (1-g(x)) + \int_0^x g(y) \dd y \ge \Gamma$,
\end{itemize}
then the algorithm is $\Gamma$-competitive.
\end{fact}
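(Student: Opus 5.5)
The plan is a primal--dual (charging) argument. We build dual variables $\alpha_w \ge 0$ for the vertices such that $\sum_w \alpha_w = \sum_w x_w/2$, which is the size of the fractional matching. We then show $\alpha_u+\alpha_v \ge \Gamma$ for every edge $(u,v)\in E$. Weak duality against the fractional vertex-cover LP of the offline graph then gives $\Gamma$-competitiveness.

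\textbf{Gain sharing.} Suppose that, while a vertex $u$ is being processed, $x_{uv}$ increases by $\dd x$ for an earlier neighbor $v$ currently at level $x_v$. We give $g(x_v)\,\dd x$ to $\alpha_v$ and $(1-g(x_v))\,\dd x$ to $\alpha_u$. Each unit of matching is thus split completely, so the sum identity holds. Since $g$ is non-decreasing and $u$ always matches to its least matched neighbor, every increment received by $u$ during its own arrival is at least $1-g(y)$, where $y$ is the level of the least matched neighbor when $u$ stops.

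\textbf{Key lemma.} Every vertex $v$ satisfies $\alpha_v \ge \int_0^{x_v} g(y)\,\dd y$ at all times. Let $z$ be the least-neighbor level at which $v$'s own arrival terminated. By the argument above, $v$'s arrival yields gain at least $x_v^{(0)}(1-g(z))$, where $x_v^{(0)}$ is $v$'s level right after its arrival.
\begin{itemize}
\item In the typical case $x_v^{(0)}=a(z)$. The first condition then gives a gain of at least $\int_0^{a(z)} g$.
\item Every later increase of $x_v$ from level $t$ contributes $g(t)\,\dd t$. Integrating these increments from $x_v^{(0)}$ to $x_v$ gives the lemma.
\end{itemize}

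\textbf{Edge covering.} Fix an edge $(u,v)$ with $v$ arriving before $u$, and look at the moment $u$'s arrival processing ends. If it ended because $x_u=a(y)$, where $y$ is the least neighbor level, then $x_v\ge y$. Hence $\alpha_v \ge \int_0^{y} g$ by the lemma, using monotonicity of $\alpha_v$ over time. Also $\alpha_u \ge a(y)(1-g(y))$. The second condition at $x=y$ then gives $\alpha_u+\alpha_v\ge\Gamma$.

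\textbf{Main obstacle: saturated neighbors.} The delicate case is when $u$ (or, inside the lemma, $v$ itself) stops because all neighbors are saturated before reaching level $a(\cdot)$. In that case we only know $\alpha_v \ge \int_0^1 g$ and that $u$ got $x_u(1-g(1))$. I would first try to reduce to the situation where the saturated case is dominated by the $x=1$ instance of the second condition. Concretely, one can assume w.l.o.g.\ that $a(1)=0$, or more carefully handle a partial stop: $u$ then received an amount of matching at least as large as what it needs, up to the point where it would have hit $a(\cdot)$. One checks that the bound $x(1-g(x))+\int_0^x g$ still applies at the relevant levels.

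A similar careful bookkeeping is needed inside the key lemma when $x_v^{(0)}<a(z)$. There all of $v$'s neighbors were at level $1$, and the gain $x_v^{(0)}(1-g(1))$ must still be shown to cover $\int_0^{x_v^{(0)}} g$. Here I would use monotonicity of $a$ together with the first condition applied at $z=1$.
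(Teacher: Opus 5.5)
The paper omits the proof of this fact (it is quoted from Tang and Zhang), so the question is whether your primal--dual plan closes on its own. Most of it does:
the gain-sharing rule;
the key lemma, including its saturated case, where $t\mapsto t(1-g(1))-\int_0^t g$ is concave, vanishes at $0$ and is nonnegative at $a(1)$ by the first condition, hence is nonnegative on $[0,a(1)]$;
and the edge-covering case in which $u$ stops at $x_u=a(y)$.

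The genuine gap is the edge-covering case in which $u$ stops because all its neighbors, including $v$, are saturated. Neither of your fixes works there. Your claim that $u$ ``received an amount of matching at least as large as what it needs'' is false. If every neighbor of $u$ is already at level $1$ when $u$ arrives, $u$ receives nothing, and if no later vertex is adjacent to $u$ then $\alpha_u=0$ forever. So you need $\alpha_v\ge\Gamma$ for every saturated $v$. Your scheme gives exactly $\alpha_v=\int_0^1 g$ when $v$ arrived with no neighbors. Setting $a(1)=0$ is not without loss of generality either. It does not change the algorithm, but the second condition at $x=1$ then becomes $\int_0^1 g\ge\Gamma$, which the hypotheses with the original $a(1)$ do not imply.

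In fact no argument can close this case from the two listed conditions, because the statement as written is false. The choice $g\equiv 0$, $a\equiv 1$, $\Gamma=1$ satisfies both conditions, yet the resulting algorithm is fractional greedy. On the paper's toy example (an edge $v_1v_2$, then $v_3$ adjacent to $v_1$ and $v_4$ adjacent to $v_2$), it matches $1$ against an optimum of $2$.

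The missing hypothesis is the boundary condition $\int_0^1 g(y)\,\dd y\ge\Gamma$, i.e.\ $G(1)\ge\Gamma$. The paper does impose this in Lemma~\ref{lem:optimal}, deriving it from $F(1)\le F_1(1)=\frac12-\Gamma$. Once you add it, the saturated case is immediate, since $\alpha_v\ge\int_0^1 g\ge\Gamma$. The rest of your argument then goes through, up to routine conventions about continuity of $a$ at the stopping time.
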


Based on this fact, it suffices to design functions $a,g$ so that the corresponding ratio $\Gamma$ is maximized. Tang and Zhang explicitly constructed functions $a,g$ attaining the $\Gamma^*$ competitive ratio, though without proving the optimality of their construction. As one might expect, these inequalities also appear in our analysis after appropriate transformations, and in Section~\ref{sec:math} we prove the optimality of their construction.
\paragraph{Roadmap.} In Section~\ref{sec:math}, we establish a sequence of mathematical facts, that are directly related to the functions $a,g$ of Tang and Zhang~\cite{ec/TangZ24}. In Section~\ref{sec:construction}, we provide our construction. In Section~\ref{sec:main}, we prove the main theorem. 

\section{Mathematical Facts}
\label{sec:math}

Fix two constants $\eps, \Gamma > 0$. Readers may think of $\eps$ as a small positive number, and $\Gamma$ as a constant slightly larger than the constant $\Gamma^*$, as defined in Theorem~\ref{thm:main}. In Section~\ref{sec:construction}, we will further restrict $\varepsilon$ to be the reciprocal of an integer, though this constraint is not relevant in the current section.

We introduce the following family of functions $\{F_i\}_{i=1}^{\infty}$ defined on the interval $[0,1]$. The functions are defined recursively as follows:
\begin{align*}
& F_1(x):= 1-\frac{x}{2}-\Gamma~; \\
& F_{n+1}(x) := \max_{ \left\{ \substack{0\le a\le 1, \\ x+\eps a\le 1} \right\}} \min \left\{ F_n(x+\eps a) + \eps F_n(a), (1-\eps) F_n(x+\eps a)+ \eps \left(\frac{(1+\eps)a + x}{2} - \Gamma \right) \right\}~.
\end{align*}

In Section~\ref{sec:construction}, we will present a recursive construction of the instance over $n$ steps. The recursive definitions of the above functions are closely aligned with this construction. In particular, the value $F_n(0)$ represents the difference between the size of the matching produced by the algorithm and $\Gamma$ times the size of the optimal matching. As we will show in the next section, if $F_n(0) < 0$, then the algorithm does not achieve a competitive ratio better than $\Gamma$.

In this section, we prove that if $\Gamma > \Gamma^*+\eps$, $F_n(0)$ would eventually be negative for sufficiently large number $n$. This is a purely technical statement. The connection between these functions and the algorithmic result of Tang and Zhang~\cite{ec/TangZ24} will become clear in the analysis. 

We start with a few basic properties of the function $F_n$. We first prove that for every fixed $x$, the sequence $\{F_n(x)\}_n$ is non-increasing.
\begin{claim}
\label{cl:monotone}
For every $n$, it holds that $F_n(x) \ge F_{n+1}(x), \forall x \in [0,1]$.
\end{claim}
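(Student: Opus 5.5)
The plan is induction on $n$. The key observation is that the recursion defining $F_{n+1}$ from $F_n$ is a \emph{monotone operator}. Write $F_{n+1} = T(F_n)$, where for a function $F$ on $[0,1]$,
\[
T(F)(x) := \max_{ \left\{ \substack{0\le a\le 1, \\ x+\eps a\le 1} \right\}} \min \left\{ F(x+\eps a) + \eps F(a), (1-\eps) F(x+\eps a)+ \eps \left(\frac{(1+\eps)a + x}{2} - \Gamma \right) \right\}.
\]
The arguments $x+\eps a$ and $a$ always lie in $[0,1]$, so $T(F)$ is well defined. Since $\eps \ge 0$ and $1-\eps \ge 0$ (we take $\eps\le 1$), both expressions inside the $\min$ are non-decreasing in the values of $F$. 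Taking $\min$ and then $\max$ over the same feasible set of $a$ preserves pointwise order. Hence $F \ge G$ pointwise implies $T(F) \ge T(G)$ pointwise. Given the base case $F_1 \ge F_2$, the inductive step is immediate: $F_n \ge F_{n+1}$ implies $F_{n+1} = T(F_n) \ge T(F_{n+1}) = F_{n+2}$.

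The only real work is the base case $F_2(x) \le F_1(x)$ for all $x\in[0,1]$. For every feasible $a$, I will bound the $\min$ by its second term and evaluate that term with $F_1(y) = 1 - y/2 - \Gamma$. Collecting the constant terms, the $x$-terms and the $a$-terms gives
\[
(1-\eps)\Bigl(1-\tfrac{x+\eps a}{2}-\Gamma\Bigr) + \eps\Bigl(\tfrac{(1+\eps)a+x}{2}-\Gamma\Bigr) = 1-\Gamma-\tfrac{x}{2} - \eps(1-x) + \eps^2 a .
\]
The feasibility constraint $x+\eps a\le 1$ gives $\eps^2 a \le \eps(1-x)$. Therefore this quantity is at most $1-\Gamma-x/2 = F_1(x)$ for every feasible $a$. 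Taking the maximum over $a$ yields $F_2(x)\le F_1(x)$.

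I expect no serious obstacle. The one point needing care is the base case: the first term of the $\min$ alone does not obviously lie below $F_1$, so one must use the second term together with the constraint $x+\eps a\le 1$, which exactly absorbs the $\eps^2 a$ surplus. The monotonicity of $T$ only needs $0\le \eps\le 1$, which holds since $\eps$ is a small positive constant.
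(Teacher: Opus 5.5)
Your proof is correct and matches the paper's argument. The base case bounds the minimum by its second term and uses $x+\eps a\le 1$ to absorb the $\eps^2 a$ surplus, and the inductive step applies the pointwise monotonicity of the recursion; you also make explicit that this monotonicity needs $1-\eps\ge 0$, which the paper leaves implicit.
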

\begin{proof}
We prove the statement by induction on $n$. We first prove the base case that $\forall x \in [0,1], F_1(x) \ge F_2(x)$:
\begin{align*}
F_2(x) & \le \max_{ \left\{ \substack{0\le a\le 1, \\ x+\eps a\le 1} \right\}} \left( (1-\eps) F_1(x+\eps a)+ \eps \left(\frac{(1+\eps)a + x}{2} - \Gamma \right) \right) \\
& = \max_{ \left\{ \substack{0\le a\le 1, \\ x+\eps a\le 1} \right\}}\left( (1-\eps) \left( 1- \frac{x+\eps a}{2} - \Gamma \right) + \eps \left(\frac{(1+\eps)a + x}{2} - \Gamma \right) \right) \\
& = \max_{ \left\{ \substack{0\le a\le 1, \\ x+\eps a\le 1} \right\}} \left( 1 - \frac{x}{2} - \Gamma + \eps ( x+\eps a-1) \right) \\
& \le 1-\frac{x}{2} - \Gamma = F_1(x)~,
\end{align*}
where the first inequality holds by dropping the first term of the minimum from the definition, and the last inequality holds by the fact that $x+\eps a \le 1$. This finishes the proof of the base case.

Suppose the statement holds for $n$, i.e., $F_n(x) \le F_{n-1}(x), \forall x \in [0,1]$. Then we have,
\begin{align*}
F_{n+1}(x) & = \max_{ \left\{ \substack{0\le a\le 1, \\ x+\eps a\le 1} \right\}} \min \left\{ F_n(x+\eps a) + \eps F_n(a), (1-\eps) F_n(x+\eps a)+ \eps \left(\frac{(1+\eps)a + x}{2} - \Gamma \right) \right\} \\
& \le \max_{ \left\{ \substack{0\le a\le 1, \\ x+\eps a\le 1} \right\}} \min \left\{ F_{n-1}(x+\eps a) + \eps F_{n-1}(a), (1-\eps) F_{n-1}(x+\eps a)+ \eps \left(\frac{(1+\eps)a + x}{2} - \Gamma \right) \right\} \\
& = F_n(x)~,
\end{align*}
where the inequality holds by inductive hypothesis.
\end{proof}

Next, we prove the function is concave.
\begin{claim}
\label{cl:concave}
For every $n$, $F_n(\cdot)$ is concave.
\end{claim}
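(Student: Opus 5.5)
We argue by induction on $n$. The base case is immediate, since $F_1(x)=1-\frac{x}{2}-\Gamma$ is affine and hence concave. For the inductive step we assume $F_n$ is concave on $[0,1]$ and show that $F_{n+1}$ is concave. The idea is to write $F_{n+1}$ as the partial maximization, over a convex domain, of a function jointly concave in $(x,a)$.

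First, the feasible set
\[
D=\{(x,a): 0\le x\le 1,\ 0\le a\le 1,\ x+\eps a\le 1\}
\]
is a convex polytope. Next, we define
\[
H(x,a):=\min\left\{ F_n(x+\eps a)+\eps F_n(a),\ (1-\eps)F_n(x+\eps a)+\eps\left(\frac{(1+\eps)a+x}{2}-\Gamma\right)\right\},
\]
so that $F_{n+1}(x)=\max_{a:(x,a)\in D} H(x,a)$. We check that each argument of the minimum is jointly concave on $D$:
\begin{itemize}
\item $(x,a)\mapsto F_n(x+\eps a)$ is a concave function composed with an affine map, so it is concave.
\item $(x,a)\mapsto F_n(a)$ is concave because it depends only on $a$.
\item The term $\frac{(1+\eps)a+x}{2}-\Gamma$ is affine.
\end{itemize}
The coefficients $1$, $\eps$, $1-\eps$ are all nonnegative, using $\eps\le 1$. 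Hence both arguments are concave, and so is their pointwise minimum $H$.

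The key step is the standard fact that partial maximization preserves concavity. Take $x_1,x_2\in[0,1]$, $\lambda\in[0,1]$, and let $a_1,a_2$ be maximizers for $x_1,x_2$. These exist because $F_n$ is continuous, being concave and finite on the closed interval; the slice of $D$ is nonempty, since $a=0$ is always feasible; and it is compact. By convexity of $D$, the pair $(\lambda x_1+(1-\lambda)x_2,\ \lambda a_1+(1-\lambda)a_2)$ lies in $D$. Then
\[
F_{n+1}(\lambda x_1+(1-\lambda)x_2)\ge H(\lambda x_1+(1-\lambda)x_2,\ \lambda a_1+(1-\lambda)a_2)\ge \lambda F_{n+1}(x_1)+(1-\lambda)F_{n+1}(x_2),
\]
where the second inequality is the joint concavity of $H$. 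If one prefers to avoid attainment, the same argument works with $\delta$-approximate maximizers followed by $\delta\to 0$.

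The only delicate point is regularity at the boundary $x=1$ and whether concavity on a closed interval gives continuity. A concave function can jump down only at the endpoints, so to be safe we use the approximate-maximizer version, which needs no continuity at all. The induction then closes.
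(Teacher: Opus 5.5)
Your proof is correct and is essentially the paper's argument: induction on $n$, take (near-)maximizers $a_1,a_2$ at the two points, use $\lambda a_1+(1-\lambda)a_2$ as a feasible choice at the convex combination, and combine concavity of $F_n$ with concavity of a pointwise minimum. Your joint concavity of $H$ is exactly the paper's inequality $\min(b_1+b_2,c_1+c_2)\ge\min(b_1,c_1)+\min(b_2,c_2)$. You are slightly more careful than the paper in checking feasibility via convexity of $D$ and in not assuming the maximum is attained. Note, though, that your first justification, that a concave finite function on a closed interval is continuous, is false as stated; the $\delta$-approximate-maximizer version is the one to keep.
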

\begin{proof}
We prove the statement by induction. The base case when $n=1$ holds by inspection to the definition. Particularly, $F_1(\cdot)$ is a linear function.
Suppose the statement holds for $n$. For an arbitrary $x',x'' \in [0,1]$, it suffices to prove that for every $t \in [0,1]$ and $x=t x'+ (1-t)x''$, we have $F_{n+1}(x) \ge t F_{n+1}(x')+ (1-t)F_{n+1}(x'')$. Let $a',a''$ be the arguments of the maxima for $F_{n+1}(x'), F_{n+1}(x'')$ respectively. Let $a=t a'+ (1-t)a''$. Then we have,
\begin{align*}
F_{n+1}(x) & \ge \min \left\{F_n(x+\eps a)+\eps F_n(a), (1-\eps) F_n(x+\eps a) + \eps \left(\frac{(1+\eps)a + x}{2} - \Gamma \right) \right\} \\
& \ge \min \bigg\{ \left(t F_n(x'+\eps a') + (1-t)F_n(x''+\eps a'')\right) +  \eps \left(tF_n(a') + (1-t)F_n(a'')\right) , \\
& \quad \left(tF_n(x'+\eps a')+(1-t)F_n(x''+\eps a'') \right) + \eps \left(\frac{(t((1+\eps)a'+x') + (1-t)((1+\eps)a''+x'')}{2} - \Gamma \right) \bigg\} \\
& \ge tF_{n+1}(x') + (1-t)F_{n+1}(x'')~,
\end{align*}
where the second inequality holds by the inductive hypothesis, and the last inequality holds by the fact that $\min(b_1+b_2,c_1+c_2) \ge \min(b_1,c_1) + \min(b_2,c_2)$. This concludes the proof of the statement.
\end{proof}

Finally, we prove the function is $\frac{1}{2}$-Lipschitz.
\begin{claim}
\label{cl:lipschitz}
For every $n$, $F_n(\cdot)$ is $\frac{1}{2}$-Lipschitz, i.e., $|F_n(x)-F_n(x')| \le \frac{1}{2} |x-x'|$.
\end{claim}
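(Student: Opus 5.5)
The plan is induction on $n$. The base case is immediate, since $F_1(x)=1-\frac{x}{2}-\Gamma$ is linear with slope $-\frac12$. For the inductive step, assume $F_n$ is $\frac12$-Lipschitz on $[0,1]$ and fix $0\le x<x'\le 1$. Write $\delta:=x'-x$. Denote by $\Phi_n(x,a)$ the minimum inside the definition of $F_{n+1}$, so that $F_{n+1}(x)=\max_a \Phi_n(x,a)$ over the feasible set $\{0\le a\le 1,\ x+\eps a\le 1\}$. I will prove the two one-sided inequalities $F_{n+1}(x)\ge F_{n+1}(x')-\frac{\delta}{2}$ and $F_{n+1}(x')\ge F_{n+1}(x)-\frac{\delta}{2}$. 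Each is proved by taking the maximizer at one point and exhibiting a feasible $a$ at the other point that loses at most $\frac{\delta}{2}$ in each of the two terms of the minimum. Since $\min$ is $1$-Lipschitz in each argument jointly, a loss of at most $\frac{\delta}{2}$ in both terms gives a loss of at most $\frac{\delta}{2}$ in $\Phi_n$.

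\emph{Same $a$ at both points.} If $a$ is feasible at both $x$ and $x'$, compare $\Phi_n(x,a)$ with $\Phi_n(x',a)$ term by term.
\begin{itemize}
\item The first term $F_n(\cdot+\eps a)+\eps F_n(a)$ changes by at most $\frac{\delta}{2}$, using the inductive hypothesis on $F_n(x+\eps a)$ versus $F_n(x'+\eps a)$.
\item The second term consists of $(1-\eps)F_n(\cdot+\eps a)$, which moves by at most $(1-\eps)\frac{\delta}{2}$, and the linear part $\eps\frac{(1+\eps)a+\cdot}{2}$, which moves by exactly $\eps\frac{\delta}{2}$. The total is at most $\frac{\delta}{2}$.
\end{itemize}
This settles the first inequality outright. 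A maximizer $a'$ for $x'$ satisfies $x+\eps a'\le x'+\eps a'\le 1$, so $a'$ is feasible at $x$ and the comparison applies.

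\emph{Second inequality when $a$ is not feasible at $x'$.} Let $a$ be the maximizer at $x$. If $x'+\eps a\le 1$, the same comparison finishes the proof. Otherwise, and this is the main obstacle, I would take $a':=a-\frac{\delta}{\eps}$, which keeps the first argument fixed: $x'+\eps a'=x+\eps a\le 1$.
\begin{itemize}
\item In the first term, only $\eps F_n(a')$ changes, by at most $\eps\cdot\frac12\cdot\frac{\delta}{\eps}=\frac{\delta}{2}$.
\item In the second term, $F_n(x+\eps a)$ is unchanged. The linear part changes by $\frac{\eps}{2}\bigl((1+\eps)(a'-a)+\delta\bigr)=-\frac{\delta}{2}$, a loss of exactly $\frac{\delta}{2}$.
\end{itemize}
So $\Phi_n(x',a')\ge\Phi_n(x,a)-\frac{\delta}{2}$, as needed.

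This choice requires $a'\ge 0$, i.e.\ $\eps a\ge \delta$. To handle this I would reduce to small $\delta$, noting that the Lipschitz property along a chain of short subintervals implies it globally. The residual case $\eps a<\delta$ together with $x'+\eps a>1$ forces $x'>1-\delta$, so it occurs only within distance $\delta$ of the right endpoint. There I would instead use $a'=\max\bigl(0,(1-x')/\eps\bigr)$ and bound the changes directly. Alternatively, I would make the chain avoid this regime, using concavity of $F_{n+1}$ (Claim~\ref{cl:concave}) on $[x,1]$ to control the last short piece. Getting this endpoint case clean is where I expect most of the care to go; the rest is the routine term-by-term bookkeeping above.
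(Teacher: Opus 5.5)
Your overall structure is the paper's: induct, take a maximizer at one point, exhibit a feasible $a$ at the other point, and check that each term of the minimum loses at most half the distance. Your first inequality and the case $\eps a\ge\delta$ of the second are correct. The gap is the remaining endpoint case ($x'+\eps a>1$, $\eps a<\delta$), which you leave as a plan.

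Your concavity fallback for that case would fail. Concavity of $F_{n+1}$ only makes slopes on the right a lower bound for slopes on the left, so it cannot exclude a steep drop at the right endpoint. For instance, the function equal to $0$ on $[0,1)$ and to $-1$ at $1$ is concave. Such a drop is exactly what $F_{n+1}(x')\ge F_{n+1}(x)-\frac{\delta}{2}$ must rule out. Chaining does not help either, since the bad case persists for arbitrarily small $\delta$ next to $1$.

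Your first suggestion, $a'=(1-x')/\eps$, does close it (the $\max$ with $0$ is superfluous since $x'\le 1$). In fact it works throughout the case $x'+\eps a>1$, so the split on $\eps a\ge\delta$ and the chaining are unnecessary. This is precisely the paper's choice, which in your notation reads $a'=\min(a,(1-x')/\eps)$.

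To verify it, write the linear part as $\frac{\eps}{2}\bigl((x+\eps a)+a\bigr)$, and let $\Delta_1,\Delta_2$ be the changes of $x+\eps a$ and of $a$. By the inductive hypothesis, each term of the minimum then loses at most $\frac12\bigl(|\Delta_1|+\eps|\Delta_2|\bigr)$. For the second term this bound is $\frac{1-\eps}{2}|\Delta_1|+\frac{\eps}{2}\bigl(|\Delta_1|+|\Delta_2|\bigr)$.

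With $a'=(1-x')/\eps$ one has $\Delta_1=1-x-\eps a\ge 0$ and $\eps\Delta_2=\Delta_1-\delta<0$. Hence $|\Delta_1|+\eps|\Delta_2|=\delta$, which is the identity $|x+\eps a-x'-\eps a'|+\eps|a-a'|=|x-x'|$ at the heart of the paper's proof. The same bookkeeping also covers your other two cases ($\Delta_2=0$, respectively $\Delta_1=0$).
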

\begin{proof} 
We prove the statement by induction. The base case when $n=1$ holds by inspection to the definition. 
Suppose the statement holds for $n$. For arbitrary $x,x' \in [0,1]$, let $a'$ be the argument of the maxima $F_{n+1}(x')$.

Let $a \in [0,1]$ be that $x+\eps a = \min(x+\eps a',1)$. Then we have,
\begin{align}
F_{n+1}(x) & \ge \min \left\{F_n(x+\eps a)+\eps F_n(a),  (1-\eps) F_n(x+\eps a) + \eps\left(\frac{(1+\eps)a + x}{2} - \Gamma \right) \right\} \notag \\
& \ge \min \bigg\{ F_n(x'+\eps a')+\eps F_n(a') - \frac{1}{2} |x+\eps a-x'-\eps a'| - \frac{1}{2} \eps |a-a'| , \notag \\
&  (1-\eps)F_n(x'+\eps a') + \eps \left(\frac{(1+\eps)a' + x'}{2} - \Gamma \right) - \frac{1}{2} |x+\eps a-x'-\eps a'| - \frac{1}{2} \eps|a-a'| \bigg\}, 
\label{eqn:lipschitz}
\end{align}
where the last inequality holds by the inductive hypothesis. Furthermore, we verify that
\[
|x+\eps a-x'-\eps a'|+\eps|a-a'| = |x-x'|~.
\]
We consider the following two cases:
\begin{itemize}
\item If $x+\eps a' \le 1$, then we have $a=a'$.	 Consequently, $|x+\eps a-x'-\eps a'| + \eps |a-a'| = |x-x'|$.
\item If $x+\eps a' > 1$, then it must be the case that $x>x'$ and $a<a'$, since $x'+\eps a'\le 1$. Consequently,
\[
|x+\eps a-x'-\eps a'| + \eps |a-a'| = x+\eps a - x'-\eps a' + \eps (a'-a) = x-x'.
\]
\end{itemize}
Substituting the equation to \eqref{eqn:lipschitz}, we have that
\begin{multline*}
F_{n+1}(x) \ge \min \Big\{ F_n(x'+\eps a')+\eps F_n(a') - \frac{1}{2} |x-x'|, \\
 (1-\eps)F_n(x'+\eps a') + \eps \left(\frac{(1+\eps)a' + x'}{2} - \Gamma \right) - \frac{1}{2} |x-x'|\Big\} = F_{n+1}(x')-\frac{1}{2} |x-x'|~,
\end{multline*}
which concludes  the proof of the claim.
\end{proof}

Now, we are ready to prove our main lemma.
\begin{lemma}
\label{lem:math}
If $\Gamma > \Gamma^* + \eps$, there exists a sufficiently large $n$ so that $F_n(0) < 0$.	
\end{lemma}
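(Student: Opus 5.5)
The plan is to argue by contradiction. Suppose $F_n(0)\ge 0$ for every $n$. By Claim~\ref{cl:lipschitz}, $F_n(x)\ge F_n(0)-\frac{x}{2}\ge -\frac12$ on $[0,1]$, so the family is uniformly bounded. By Claim~\ref{cl:monotone} it is pointwise non-increasing, so the pointwise limit $F:=\lim_n F_n$ exists. Since the $F_n$ are equi-Lipschitz on a compact interval, the convergence is uniform. The limit $F$ is then concave (Claim~\ref{cl:concave}), $\frac12$-Lipschitz, and satisfies $F(0)\ge 0$.

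Uniform convergence lets me pass to the limit inside the max--min: the feasible set of $a$ is compact and the objective is continuous. So $F$ is a fixed point of the recursion. For every $x$ there is some $a=a(x)\in[0,1]$ with $x+\eps a\le 1$ and
\[
F(x)\le F(x+\eps a)+\eps F(a),\qquad F(x)\le (1-\eps)F(x+\eps a)+\eps\Big(\tfrac{(1+\eps)a+x}{2}-\Gamma\Big).
\]

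Next I turn these into the Tang--Zhang inequalities of Fact~\ref{fact:tz}. Concavity gives one-sided derivatives, and $F(x+\eps a)-F(x)\le \eps a F'_+(x)$. Dividing by $\eps$ yields
\[
F(a)\ge -aF'_+(x),\qquad aF'_+(x)+\tfrac{a+x}{2}-F(x)+O(\eps)\ge \Gamma,
\]
where the $O(\eps)$ absorbs the terms $\eps F(x+\eps a)$ and $\eps^2 a/2$ via the bounds $|F|\le 1$. I then set
\[
g(x):=\tfrac12-F'_+(x),\qquad G(x):=\int_0^x g = \tfrac x2-F(x)+F(0).
\]
Concavity makes $g$ non-decreasing, and the Lipschitz bound gives $g\in[0,1]$. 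In these terms the second inequality becomes $a(1-g(x))+G(x)\ge \Gamma+F(0)-O(\eps)\ge\Gamma-O(\eps)$. The first becomes $a(1-g(x))\ge G(a)-F(0)$.

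The sign of the $F(0)$ term in the first inequality is the delicate point. To handle it, I would replace $F$ by $F-c$ for a suitable $c\in[0,F(0)]$, or directly track $F(0)$ as a parameter $\delta\ge 0$. The result is a perturbed version of the Tang--Zhang program in which $F(0)$ helps the second constraint by exactly as much as it hurts the first. I expect the bound derived below to be monotone in $\delta$, so it suffices to treat $\delta=0$. Put differently, any feasible $(a,g)$ with $\delta>0$ can be converted to one with $\delta=0$ by shifting $G$, i.e.\ by adding a constant part to $g$ near $0$.

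The main obstacle is the remaining step: proving that the Tang--Zhang program has value at most $\Gamma^*$, robustly. Concretely, I need that no non-decreasing $g$ and function $a$ satisfy $a(1-g(x))\ge G(a)$ and $a(1-g(x))+G(x)\ge \Gamma$ with $\Gamma>\Gamma^*+O(\eps)$. My first attempt is a differential argument. Consider the extremal case where both constraints are tight. The first constraint says $a(x)$ is pinned by $1-g(x)=G(a)/a$. Substituting into the second gives $G(x)\ge \Gamma - a(1-g(x))$, an inequality relating $G$ at $x$ to $g$ at $x$ and at $a(x)$. Differentiating the tight version produces an ODE whose solutions I expect to be power functions. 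These would be parameterized by the exponent $k$ appearing in $\Gamma^*$, obtained by guessing $G(y)\propto y^{\alpha}$ on the relevant range.

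The key is a comparison argument. For any feasible $g$, $G$ must dominate the extremal solution, and this forces $G(1)$ or the constraint at $x=1$ to fail unless $\Gamma\le\Gamma^*$. Maximizing over the free exponent recovers the formula $\max_k$ in Theorem~\ref{thm:main}. Finally, the $O(\eps)$ slack must be shown to shift the threshold by at most $\eps$. This is where the hypothesis $\Gamma>\Gamma^*+\eps$ enters, and it yields the contradiction.
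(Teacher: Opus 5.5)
\textbf{Gap: the bound on the Tang--Zhang program is not proved.} Your reduction up to the Tang--Zhang inequalities is sound and follows the paper: contradiction hypothesis, monotone limit $F$, fixed-point identity, then $(a,g,G)$ via $F(x+\eps a)-F(x)\le\eps aF'_+(x)$. Your equi-Lipschitz/uniform-convergence justification for passing the limit through the max--min is more careful than the paper's. Two bookkeeping remarks on this part:
\begin{itemize}
\item The slack in the second constraint is exactly at most $\eps a\le\eps$. It comes from the $\eps^2a/2$ term and from replacing $F(x+\eps a)$ by $F(x)$ via the $\tfrac12$-Lipschitz bound. This is what the hypothesis $\Gamma>\Gamma^*+\eps$ needs.
\item The ``delicate'' $F(0)$ term is an artifact of normalizing $G(0)=0$. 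With the paper's $G(x)=\tfrac{x}{2}-F(x)$, so $G(0)=-F(0)\le 0$, the constraints are simply $a(1-g(x))\ge G(a)$ and $a(1-g(x))+G(x)\ge\Gamma-\eps$, plus $G(1)\ge\Gamma$ from $F(1)\le F_1(1)$. If you want $G(0)=0$, replace $G$ by $\max(G,0)$; shifting $G$ upward would break the first constraint.
\end{itemize}
Since the slack enters only as $\tilde\Gamma=\Gamma-\eps$, an exact bound on the program suffices and no robustness analysis is needed.

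The genuine gap is the step you call the main obstacle, which is the real content of the lemma (the paper's Lemma~\ref{lem:optimal}): showing these constraints force $\tilde\Gamma\le\Gamma^*$. What you give is a heuristic. Assuming both constraints are tight is unjustified. After inverting $G$, the resulting relation couples values at $y$ and at $\tilde\Gamma-y$, so it is not an ODE and a standard comparison principle does not apply. Also, the extremal solution is not of power form, so the ansatz $G(y)\propto y^\alpha$ cannot produce the extremal family. The missing ideas are:
\begin{itemize}
\item[(i)] \emph{Eliminate $a$.} For $x$ with $G(x)\in[0,\tilde\Gamma]$, it is the \emph{second} constraint that gives $a(x)\ge\frac{\tilde\Gamma-G(x)}{1-g(x)}$. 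Since $l(a):=a(1-g(x))-G(a)$ is concave in $a$ with $l(0)=-G(0)\ge 0$, the first constraint transfers to this smaller point: $\tilde\Gamma-G(x)\ge G\bigl(\frac{\tilde\Gamma-G(x)}{1-g(x)}\bigr)$.
\item[(ii)] \emph{Invert.} With $H=G^{-1}$ on $[0,\tilde\Gamma]$ this becomes $H'(y)\ge\frac{H(\tilde\Gamma-y)}{H(\tilde\Gamma-y)-(\tilde\Gamma-y)}$, with $H(y)\ge y$ and $H(\tilde\Gamma)\le 1$.
\item[(iii)] \emph{Reduce to equality.} A monotone iteration on the integrated inequality produces a solution of the equality version. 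This is the rigorous substitute for your comparison step.
\item[(iv)] \emph{Solve.} The first integral $\bigl((H(y)-y)(H(\tilde\Gamma-y)-(\tilde\Gamma-y))\bigr)'=\tilde\Gamma-2y$ decouples the reflection. It yields $H(y)-y=c\,(1-y/r_1)^{\alpha_1}(1-y/r_2)^{\alpha_2}$ with $c=H(0)$, where $k$ (hence $r_{1,2}$ and $\alpha_{1,2}$) is determined by $c$ rather than being a power-law exponent.
\end{itemize}
The terminal condition $H(\tilde\Gamma)\le 1$ then gives $\tilde\Gamma\le\Bigl(\bigl(\tfrac{k+1}{2}\bigr)^{\frac{k+1}{2k}}\bigl(\tfrac{k-1}{2}\bigr)^{\frac{k-1}{2k}}+1\Bigr)^{-1}\le\Gamma^*$. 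Without these steps, or an equivalent argument, your proposal does not reach the contradiction.
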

\begin{proof}
We prove the statement by contradiction. Suppose $F_n(0) \ge 0$ for every $n$. 
By the Lipschitzness of $F_n$, we have that $F_n(x) \ge F_n(0) - \frac{x}{2} \ge - \frac{x}{2}$. 

Thus, for every $x \in [0,1]$, $\{F_n(x)\}_n$ is a non-increasing sequence (by Claim~\ref{cl:monotone}) and is lower bounded by $-\frac{x}{2}$. Hence, we define $F(x) := \lim_{n \to \infty}F_n(x)$.
Moreover, this limit function $F(\cdot)$ is also $\frac{1}{2}$-Lipschitz and concave. 
In addition, by taking the limit of $F_n$ in its definition, the function $F$ satisfies the following property:\begin{align*}
\forall x \in [0,1], F(x) = \max_{ \left\{ \substack{0\le a\le 1, \\ x+\eps a\le 1} \right\}} \min \left\{ F(x+\eps a) + \eps F(a), (1-\eps)F(x+\eps a) + \eps \left(\frac{(1+\eps)a + x}{2} - \Gamma \right)\right\}~.
\end{align*}
Therefore, there exists a corresponding function $a:[0,1] \to [0,1]$ such that
\begin{align*}
\forall x \in [0,1], \quad & F(x+\eps a(x)) +  \eps F(a(x)) \ge F(x)~, \\
& (1-\eps)F(x+\eps a(x)) + \eps \left(\frac{(1+\eps)a(x)+x}{2}-\Gamma \right) \ge F(x)~.
\end{align*}
Let $G(x) := \frac{x}{2} - F(x)$, and we rearrange the above two inequalities as follows:
\begin{align*}
\forall x \in [0,1], \quad & G(x+\eps a(x)) - G(x) \le \eps (a(x) - G(a(x)))\\
& G(x+\eps a(x)) - G(x) \le \eps (a(x)+G(x+\eps a(x))-\Gamma)~.
\end{align*}
Moreover, since $F(0) \ge 0$ and $F(1) \le F_1(1) = \frac{1}{2} -\Gamma$, we have that $G(0) \le 0$ and $G(1) \ge \Gamma$.

Since $F(\cdot)$ is $\frac{1}{2}$-Lipschitz and concave, we have that $G(\cdot)$ is non-decreasing, $1$-Lipschitz, and convex.
Therefore, $G$ is absolutely continuous and admits an almost-everywhere derivative. In particular, there exists a non-decreasing function $g : [0,1] \to [0,1]$ such that
\[
G(x) = G(0) + \int_0^x g(y) \dd y~, \quad \forall x \in [0,1]~.
\]
For convenience of notation, we will refer to this function $g$ as the ``derivative'' of $G$ throughout the proof.
Here, $g$ is upper bounded by $1$ since $G$ is $1$-Lipschitz.
Next, we have that
\[
G(x+\eps a(x))-G(x) \ge \eps a(x) \cdot g(x)~. 
\]
Applying the inequality to the above two inequalities and eliminating $\eps$ from both sides, we have that
\begin{align*}
\forall x \in [0,1], \quad & a(x) \cdot (1-g(x)) \ge G(a(x)) \\
& a(x) \cdot (1-g(x))+ G(x+\eps a(x)) \ge \Gamma~.
\end{align*}
We further apply the inequality $G(x+\eps a(x)) \le G(x)+\eps a(x) \le G(x) + \eps$ to the second condition.
To sum up, there exists a function $a:[0,1] \to [0,1]$ such that
\begin{align*}
\forall x \in [0,1], \quad & a(x) \cdot (1-g(x)) \ge G(a(x)) ~,\\
& a(x) \cdot (1-g(x))+G(x) \ge \Gamma - \eps~.
\end{align*}

We remark that the above family of inequalities are essentially the same properties desired by Tang and Zhang~\cite{ec/TangZ24} as explained in Section~\ref{sec:prelim}. We prove the optimality of their construction in the following lemma, whose proof is deferred to the end of the section.
\begin{lemma}
\label{lem:optimal}
If there exists a function $a:[0,1] \to [0,1]$, a non-decreasing function $g:[0,1] \to [0,1]$, and a convex function $G(x):= G(0) + \int_0^x g(y) \dd y$, such that
\begin{itemize}
\item $a(x) (1-g(x)) \ge G(a(x))$~,
\item $a(x) (1-g(x)) + G(x) \ge \tilde\Gamma$~,
\item $G(0) \le 0$ and $G(1) \ge \tilde{\Gamma}$~,
\end{itemize}
then we have $\tilde{\Gamma} \le \Gamma^*$.
\end{lemma}
Applying the lemma to the functions $a,g,G$ we derived, and the constant $\tilde\Gamma=\Gamma-\eps$, we have $\Gamma-\eps \le \Gamma^*$, which contradicts the assumption that $\Gamma > \Gamma^* + \eps$.
\end{proof}

\subsection{Proof of Lemma~\ref{lem:optimal}}
We first fix the function $g$ and restrict ourselves to $X=\{x: G(x) \in [0, \tilde{\Gamma}]\}$. If there exists $x\in X$ with $g(x)=1$, by the second condition, we must have $G(x) \ge \tilde\Gamma$, and hence $G(x)=\tilde\Gamma$. In this case, we define $\frac{\tilde\Gamma-G(x)}{1-g(x)} := 0$.
Consequently, for every $x \in X$, we have that $a(x) \ge \frac{\tilde\Gamma-G(x)}{1-g(x)}$.

We further consider the function $l(a):=a(1-g(x))-G(a)$. This function is concave in $a$, since its derivative $l'(a)=1-g(x)-g(a)$ is non-increasing. 
Therefore, by the concavity of $l$ and the fact that $\frac{\tilde\Gamma-G(x)}{1-g(x)} \in [0,a(x)]$, we have
\begin{align}
\label{eqn:condition}
& l\left(\frac{\tilde\Gamma-G(x)}{1-g(x)}\right) \ge \min (l(a(x)), l(0)) =0 \notag \\
\Longrightarrow & \ \frac{\tilde\Gamma-G(x)}{1-g(x)} \cdot (1-g(x)) - G\left(\frac{\tilde\Gamma-G(x)}{1-g(x)}\right) \ge 0 \notag \\
\Longrightarrow & \ \tilde\Gamma -G(x) \ge G\left( \frac{\tilde\Gamma-G(x)}{1-g(x)} \right).
\end{align}

Recall that $G(0) \le 0$ and $G(1) \ge \tilde{\Gamma}$. Let $H:[0,\tilde{\Gamma}] \to [0,1]$ be defined as $H(y):=G^{-1}(y) = \sup\{x:G(x)\le y\}$. 
Then, \eqref{eqn:condition} implies that
\[
H(\tilde\Gamma-y) \ge \frac{\tilde\Gamma-y}{1-\frac{1}{H'(y)}}~, \quad \forall y \in [0,\tilde\Gamma]~.
\]
Here we use the fact that $H$ is non-decreasing and the inverse function rule that $H'(y)=\frac{1}{G'(x)}= \frac{1}{g(x)} \ge 1$. Rearranging the inequality gives that
\[
H'(y) \ge \frac{H(\tilde\Gamma- y)}{H(\tilde\Gamma - y) - (\tilde\Gamma - y)}~, \quad \forall y \in [0,\tilde\Gamma]~.
\]
Integrating the above equation from $y$ to $\tilde\Gamma$ gives the following:
\begin{equation}
\label{eqn:H}
1-H(y) \ge H(\tilde\Gamma) - H(y) =\int_y^{\tilde\Gamma} H'(z) \dd z \ge \int_y^{\tilde\Gamma} \frac{H(\tilde\Gamma- z)}{H(\tilde\Gamma - z) - (\tilde\Gamma - z)} \dd z~, \quad \forall y \in [0,\tilde\Gamma]~.
\end{equation}
Moreover, $H(y) = H(0) + \int_0^y H'(z) \dd z \ge 0 + \int_0^y 1 \dd z = y$.
The next claim states that if such a function $H$ exists, there must exist $\widehat{H}$ that attains equality for every $y$. 
The proof is similar to that of Lemma 4 of Wang and Wong~\cite{icalp/WangW15}.
\begin{claim}
If $H:[0,r] \to [0,1]$ satisfies that $H(y) \ge y$, and
\[
1-H(y) \ge \int_y^{r} \frac{H(r- z)}{H(r - z) - (r - z)} \dd z~, \quad \forall y \in [0,r]~.
\]
then there exists $\widehat{H}:[0,r] \to [0,1]$ such that
\[
1-\widehat{H}(y) = \int_y^{r} \frac{\widehat{H}(r-z)}{\widehat{H}(r-z)-(r-z)} \dd z~, \quad \forall y \in [0,r]~.
\]
\end{claim}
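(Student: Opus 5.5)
We will build $\widehat{H}$ by a monotone iteration (a Picard-type fixed-point argument) that starts from $H$, in the style of Wang and Wong. The key structural fact is that the integrand $\phi(h,s) := \frac{h}{h-s}$ is non-increasing in $h$ on the region $h > s \ge 0$, since $\partial_h \phi = \frac{-s}{(h-s)^2} \le 0$. Hence making $H$ larger makes the right-hand side smaller.

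Consider the operator
\[
(TK)(y) := 1 - \int_y^r \frac{K(r-z)}{K(r-z)-(r-z)} \, \dd z ,
\]
defined on functions $K \ge H$. The hypothesis says exactly $H \le TH$. By the monotonicity of $\phi$, $T$ is order-preserving: $K_1 \le K_2$ implies $TK_1 \le TK_2$. Also, since the integrand is at least $1$, we get $TK(y) \le 1-(r-y) \le 1$.

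Set $H_0 := H$ and $H_{k+1} := TH_k$.

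1. By induction the sequence is non-decreasing: $H_1 = TH_0 \ge H_0$, and if $H_k \ge H_{k-1}$ then $H_{k+1} = TH_k \ge TH_{k-1} = H_k$.
2. Each $H_k$ satisfies $y \le H_k(y) \le 1$, so the integrands stay well-defined. The only danger is the denominator where $H_k(r-z)=r-z$, and there $H_k \ge H$ keeps the integrand no larger than for $H$, which is integrable by hypothesis.
3. The sequence therefore converges pointwise to a limit $\widehat{H} \le 1$ with $\widehat{H} \ge H \ge y$.

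To pass to the limit inside $T$, note that the integrands $\phi(H_k(r-z), r-z)$ are non-increasing in $k$ and bounded above by the integrable function $\phi(H(r-z), r-z)$. Dominated (or monotone) convergence then gives $T H_k \to T\widehat{H}$. Hence $\widehat{H} = T\widehat{H}$, which is the desired equality for all $y \in [0,r]$. The range $[0,1]$ follows from $y \le \widehat{H} \le 1$ on $[0,r]$ with $r \le 1$; this holds because $H(r) \le 1$ and $H(r) \ge r$.

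The main obstacle is step 2 together with the limit passage: controlling the singularity of $\frac{h}{h-s}$ when $h=s$. Finiteness of $\int_0^r \frac{H(s)}{H(s)-s}\,\dd s$ follows from the hypothesis at $y=0$, since the left side $1-H(0) \le 1$. Monotonicity of $\phi$ in $h$ then transfers this integrable bound to every iterate, which makes dominated convergence valid. A minor point is measurability of $H$, but $H$ is monotone in our application, and the iterates $TK$ are continuous, so they are measurable as well.
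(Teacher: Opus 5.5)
Your proposal is correct and follows essentially the same route as the paper: set $H_0=H$, iterate $H_{i+1}=TH_i$, use the monotonicity of $h\mapsto \frac{h}{h-s}$ to show by induction that the iterates are non-decreasing and bounded by $1$, and pass to the pointwise limit. Your dominated-convergence justification of the limit step, with $\frac{H(s)}{H(s)-s}$ as the integrable majorant (integrable by the hypothesis at $y=0$), makes rigorous what the paper states only as ``taking the limit on both sides.''
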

\begin{proof}
Let $H_0=H$. We recursively construct a family of functions: 
\begin{equation}
\label{eqn:H1}
H_{i+1}(y) = 1-\int_y^r \frac{H_i(r-z)}{H_i(r-z)-(r-z)} \dd z~.
\end{equation}
We prove by induction that $H_{i+1}(y) \ge H_i(y), \forall y \in [0,r]$. The statement holds for $i=0$ by the condition of the claim. If the statement holds for $i-1$, we then have
\begin{align*}
H_{i+1}(y) & = 1-\int_y^r \frac{H_i(r-z)}{H_i(r-z)-(r-z)} \dd z \\
& \ge 1-\int_y^r \frac{H_{i-1}(r-z)}{H_{i-1}(r-z)-(r-z)} \dd z = H_i(y)~.
\end{align*}
where the inequality holds by induction hypothesis. Therefore, $\{H_i(y)\}_i$ is non-decreasing and bounded above by $1$ for every $y$. Finally, we define the limit function 
\[
\widehat{H}(y) := \lim_{i \to \infty} H_i(y)~, \quad \forall y \in[0,r]~.
\]
The function $\widehat{H}$ satisfies the condition of the claim by taking the limit of \eqref{eqn:H1} on both sides.
\end{proof}

Finally, without loss of generality, we assume $H$ achieves equality in \eqref{eqn:H} for every $y \in [0,\tilde{\Gamma}]$, i.e., 
\[
1-H(y) = \int_y^{\tilde\Gamma} \frac{H(\tilde\Gamma-z)}{H(\tilde\Gamma-z)-(\tilde\Gamma-z)} \dd z~, \quad \forall y \in [0,\tilde{\Gamma}]~.
\]
Taking derivative on the above equation gives that
\begin{equation}
H'(y) = \frac{H(\tilde\Gamma- y)}{H(\tilde\Gamma - y) - (\tilde\Gamma - y)}~, \quad \forall y \in [0,\tilde\Gamma]~.
\label{eqn:H_equal}
\end{equation}
We proceed by solving the function explicitly. Observe that
\begin{align*}
& ((H(y)-y)\cdot (H(\tilde{\Gamma}-y)-(\tilde{\Gamma}-y)))' \\
= & \ (H'(y)-1)\cdot(H(\tilde{\Gamma}-y)-(\tilde{\Gamma}-y)) - (H'(\tilde{\Gamma}-y)-1)\cdot (H(y)-y) \\
= & \ \frac{\tilde{\Gamma}-y}{H(\tilde{\Gamma}-y)-(\tilde{\Gamma}-y)}\cdot(H(\tilde{\Gamma}-y)-(\tilde{\Gamma}-y)) - \frac{y}{H(y)-y} \cdot (H(y)-y) = \tilde{\Gamma} -2y~.
\end{align*}
Let $c=H(0)$. Integrating the above equation from $0$ to $y$, we have
\[
(H(y)-y)\cdot (H(\tilde{\Gamma}-y)-(\tilde{\Gamma}-y)) = H(0)\cdot (H(\tilde{\Gamma})-\tilde{\Gamma}) + \tilde{\Gamma} y - y^2 = c(1-\tilde{\Gamma})+\tilde{\Gamma} y-y^2~,
\]
where the last equality holds by $H(\tilde{\Gamma}) = 1$. Consequently, we have
\begin{equation}
(\ln(H(y)-y))' = \frac{H'(y)-1}{H(y)-y} = \frac{(H'(y)-1) \cdot (H(\tilde{\Gamma}-y)-(\tilde{\Gamma} - y))}{(H(y)-y) \cdot (H(\tilde{\Gamma}-y)-(\tilde{\Gamma}-y))} =\frac{\tilde{\Gamma}-y}{c(1-\tilde{\Gamma}) + \tilde{\Gamma} y - y^2}~.
\label{eqn:ln_H}
\end{equation}
We omit the tedious calculation and express the right hand side as follows:
\[
\frac{\tilde{\Gamma}-y}{c(1-\tilde{\Gamma}) + \tilde{\Gamma} y - y^2} = \frac{\alpha_1}{y-r_1} + \frac{\alpha_2}{y-r_2}~,
\]
where 
\[
k=\frac{\sqrt{\tilde{\Gamma}^2+4c(1-\tilde{\Gamma})}}{\tilde{\Gamma}} \ge 1, 
r_{1}=\tilde{\Gamma} \cdot \frac{1 + k}{2}, \alpha_{1}=\frac{k -1}{2k}, r_{2}=\tilde{\Gamma} \cdot \frac{1 - k}{2}, \alpha_{2}=\frac{k+1}{2k}~.
\]
Furthermore, we rewrite $c$ as follows:
\[
c=\frac{(k^2-1)\tilde{\Gamma}^2}{4(1-\tilde{\Gamma})}=\frac{-r_1r_2}{1-\tilde{\Gamma}}~.
\]
Integrating \eqref{eqn:ln_H} from $0$ to $\tilde\Gamma$ gives that
\begin{multline*}
\ln(H(\tilde\Gamma)-\tilde\Gamma)-\ln(H(0)-0) \ge \int_0^{\tilde\Gamma} \frac{\alpha_1}{y-r_1}  + \frac{\alpha_2}{y-r_2} \dd y \\
= \alpha_1\ln \left(\frac{r_1-\tilde\Gamma}{r_1}\right) +\alpha_2 \ln\left(\frac{\tilde\Gamma-r_2}{-r_2}\right) = \alpha_1 \ln \left( \frac{-r_2}{r_1} \right) + \alpha_2 \ln \left( \frac{r_1}{-r_2} \right)~.
\end{multline*}
Finally, we have that
\[
1 \ge H(\tilde{\Gamma})\ge \tilde{\Gamma} + c \cdot \left( \frac{-r_2}{r_1}\right)^{\alpha_1}\cdot \left(\frac{r_1}{-r_2}\right)^{\alpha_2} = \tilde{\Gamma}+\frac{(r_1)^{2\alpha_2}\cdot (-r_2)^{2\alpha_1}}{1-\tilde{\Gamma}}~.
\]
By rearranging the inequality and substituting $r_{1,2}$ and $\alpha_{1,2}$ in the format of $k$, we have that
\[
\tilde{\Gamma} \le \frac{1}{\left( \frac{k+1}{2}\right)^{\frac{k+1}{2k}} \left( \frac{k-1}{2}\right)^{\frac{k-1}{2k}}+1} \le \Gamma^*~,
\]
where the last inequality holds by the definition of $\Gamma^*$. This concludes the proof of the lemma.
 
\section{Main Construction}
\label{sec:construction}
For ease of presentation, we allow multiple vertices to arrive simultaneously. It is guaranteed that no edges exist between the newly arrived vertices. The edges connecting these new vertices to the existing ones are revealed at the same time, and the algorithm makes matching decisions for all newly revealed edges in a single step. Note that this generalization can be simulated in the original model by having the vertices arrive sequentially, one at a time.
Importantly, allowing simultaneous arrivals only makes the algorithm's task easier.

Recall that we use $x_v$ to denote the matched portion of a vertex $v$. For an arbitrary set of vertices $S$, we define $x_S$ to denote the average matched portion of the vertices in $S$, i.e., $x_S := \frac{1}{|S|} \cdot \sum_{v \in S} x_v$.

We introduce a slightly more general version of the online matching problem that involves an extra initialization step. Fix an $\eps >0$ so that $\eps^{-1}$ is an integer. Our construction is parameterized by a three tuple $(n,N,x)$, where $n$ is the number of steps of our construction, $N$ is a multiple of $\eps^{-n}$, and $x \in [0,1]$ corresponds to the initial matched level that we explain below.

\paragraph{Initialization.} 
The instance starts with an empty graph $(V_0,E_0=\emptyset)$ with $|V_0|=N$ isolated vertices. 
Next, the algorithm selects a vector $\vect{x} \in [0,1]^{V_0}$, so that $\sum_{v \in V_0} x_v = x \cdot |V_0|$. We interpret the number $x_v \in [0,1]$ as the matched portion of vertex $v$, though there are no edges in the initial graph. Notice that we allow the algorithm to distribute the matched portion arbitrarily, provided that the average matched portion of $V_0$ equals the prefixed number $x$.

Note that having an initial average matched portion $x$ is not necessarily advantageous for the algorithm. Consider the extreme case where $x = 1$: if $N$ additional vertices arrive in the next step and are fully connected to the existing vertices, the initial matching prevents the algorithm from matching any of the new edges, resulting in a competitive ratio of $0.5$. In our recursive construction, the initial value of $x$ is always inherited from the algorithm’s matching decisions in previous steps.

Observe that the original matching problem corresponds to the case $N = 0$, where no vertices exist initially. Furthermore, any instance of the form $(n, N, x = 0)$ can be simulated in the original model by having $N$ isolated vertices arrive one by one, so the algorithm has no matching decisions to make during this phase.

\paragraph{Invariants.}
The adversary maintains a partition of the vertices of the graph:
\[
V = A \cup I = \left(\cup_{i} A_i \right) \cup I~,
\]
where $A$ and $I$ are referred as \emph{active} and \emph{inactive} vertices respectively.  
The active vertices are further partitioned into $\cup A_i$ so that the subgraph induced by each $A_i$ is an empty graph. 

Moreover, the vertices of each $A_i$ are \emph{symmetric} in the sense that they share the same set of neighbors, rendering them indistinguishable to the algorithm. While this symmetry is not explicitly utilized in our analysis, it justifies why a single value representing the average matched portion suffices to capture the matching status.

Vertices are deactivated by the adversary in pairs of neighbors, ensuring that the subgraph induced by $I$ admits a perfect matching. Moreover, once a vertex becomes inactive, it shall no longer have further edges. By the end of the instance, i.e., after $n$ steps, all vertices of the graph become inactive.

At the beginning, all vertices in $V_0$ are active and belong to the same partition. 

\paragraph{Base Construction.} Suppose we reach the last step of the construction. For each active partition $A_i$, let a set of vertices $B_i$ arrive, where $(A_i,B_i)$ forms a complete bipartite graph, and $|B_i|=|A_i|$. 

The algorithm makes matching decisions between $A_i$ and $B_i$. Afterwards, we deactivate all vertices in $A_i \cup B_i$.
We apply the construction to every active partition and then ends the instance.

\paragraph{Recursive Construction.}
Suppose we have finished $n-k$ steps and have $k$ more steps with $k>1$. For each active partition $A_i$ (if exists), let a set of vertices $B_i$ arrive, where $(A_i,B_i)$ forms a complete bipartite graph, and $|B_i| = \eps |A_i|$. 

The algorithm then makes matching decisions between $A_i$ and $B_i$.
Suppose the algorithm selects a total of $a \cdot |B_i|$ fractional edges between $A_i$ and $B_i$. 
Then, the average matched portion $x_{B_i}$ increases by $a$, and the average matched portion $x_{A_i}$ increases by $\eps a$. In the next equation, we use $x_{A_i}$ to denote the average matched portion of $A_i$ \emph{before} the arrival of $B_i$.
 
We update the partition of vertices based on the value of $a$. Specifically, we compare the value of the following two quantities:
 \[
F_{k-1}(x_{A_i}+\eps a) + \eps F_{k-1}(a), \quad \text{and} \quad (1-\eps) F_{k-1}(x_{A_i}+\eps a)+ \eps \left(\frac{(1+\eps)a + x_{A_i}}{2} - \Gamma \right)~.
 \]
\begin{itemize}
\item {\bf Against Aggressive Algorithms:}
If the first term is smaller or equal to the second term, we keep the set $A_i$ active, update $B_i$ as active vertices and group $B_i$ as an independent active partition.

Intuitively, this corresponds to the case when the algorithm matches too many edges between $A_i$ and $B_i$ (i.e., $a$ is large). We shall have the perfect neighbors of $A_i$ and $B_i$ arrive later in our construction, so that the selected matchings between $A_i$ and $B_i$ are bad matching decisions.

\item {\bf Against Conservative Algorithms:} 
If the second term is smaller, we deactivate all vertices of $B_i$, and the $|B_i|$ \emph{least matched} vertices of $A_i$, which we denote as $C_{i}$. I.e., we add $B_i \cup C_i$ to the set of inactive vertices $I$.
Then, we keep the set $A_i \setminus C_i$ as a partition of active vertices.

Intuitively, this corresponds to the case when the algorithm matches too less edges between $A_i$ and $B_i$ (i.e., $a$ is small).
Notice that $B_i \cup C_i$ admits a perfect matching as they form a complete bipartite graph with the same number of vertices on both sides. However, the algorithm failed to match these edges sufficiently. 
\end{itemize}
We apply the construction to every active partition and then recurse.

\subsection{Basic Properties}
We first list a few basic properties of the instance.
The first observation is that throughout our construction, the size of $|B_i| = \eps|A_i|$ is an integer, so that the above description of our instance is valid.
\begin{claim}
If the remaining number of steps is $k$, then the number of vertices of each active partition is a multiple of $\eps^{-k}$.
\end{claim}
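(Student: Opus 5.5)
We will prove the claim by induction on the number of completed steps, i.e., by downward induction on the number $k$ of remaining steps, starting from $k=n$ and going down to $k=1$. The base case is the initial configuration: when $k=n$ remaining steps are left, the only active partition is $V_0$, and $|V_0|=N$ is a multiple of $\eps^{-n}$ by the choice of the parameter $N$ in the tuple $(n,N,x)$.

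For the inductive step, suppose that with $k>1$ steps remaining every active partition $A_i$ has size a multiple of $\eps^{-k}$. Write $|A_i| = m\eps^{-k}$ for a nonnegative integer $m$. Since $\eps^{-1}$ is an integer, $|B_i| = \eps|A_i| = m\eps^{-(k-1)}$ is an integer, so the recursive step is well defined. We then check every active partition that survives to the situation with $k-1$ remaining steps.

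\begin{itemize}
\item \textbf{Against Aggressive Algorithms:} both $A_i$ and $B_i$ remain active. We have $|A_i| = m\eps^{-k} = (m\eps^{-1})\eps^{-(k-1)}$, which is a multiple of $\eps^{-(k-1)}$ because $m\eps^{-1}$ is an integer. Likewise $|B_i| = m\eps^{-(k-1)}$ is a multiple of $\eps^{-(k-1)}$.
\item \textbf{Against Conservative Algorithms:} the surviving partition is $A_i\setminus C_i$. Its size is $|A_i|-|B_i| = m\eps^{-k} - m\eps^{-(k-1)} = m(\eps^{-1}-1)\eps^{-(k-1)}$, again a multiple of $\eps^{-(k-1)}$. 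The step is also well defined here: $|C_i| = |B_i| \le |A_i|$, and $|B_i|$ is an integer, so selecting the $|B_i|$ least matched vertices of $A_i$ makes sense.
\end{itemize}

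The construction is applied independently to each active partition, and no other active partitions are created. So every active partition with $k-1$ steps remaining has size a multiple of $\eps^{-(k-1)}$, which completes the induction. Applying this at $k=1$, every partition reaching the base construction has integer size, so $|B_i|=|A_i|$ is valid there as well.

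There is no substantive obstacle. The only points that need care are the two uses of the integrality of $\eps^{-1}$, once for $|B_i|$ and once for the surviving sets, together with checking that $A_i\setminus C_i$ is the only partition left active in the conservative case.
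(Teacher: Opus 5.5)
Your proof is correct and follows the paper's argument: backward induction on the number of remaining steps $k$. The base case comes from the choice of $N$, and the inductive step checks that $|B_i|=\eps|A_i|$ and $|A_i\setminus C_i|=|A_i|-|B_i|$ are multiples of $\eps^{-(k-1)}$; you are also slightly more explicit than the paper in noting that $A_i$ itself remains a multiple of $\eps^{-(k-1)}$ in the aggressive case.
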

\begin{proof}
We prove the statement by backward induction. The base case is when $k=n$, the statement holds by our assumption of $N$. Suppose the statement holds for $k$. There would be no active vertices after the last step. Hence, we only consider the case when $k\ge 2$. Each active partition $A_i$ is associated with a set of vertices $B_i$ arriving online, with $|B_i|=\eps |A_i|$. Notice that we either add $B_i$ as a new active partition, in which case the number of vertices $|B_i|$ is a multiple of $\eps^{-(k-1)}$ since $|A_i|$ is a multiple of $\eps^{-k}$ by induction hypothesis; or we remove $C_i$ from $A_i$, in which case the number of vertices $|A_i \setminus C_i| = |A_i|-|C_i|=|A_i|-|B_i|$ is also a multiple of $\eps^{-(k-1)}$. This concludes the proof of the claim. 
\end{proof}

The second observation is that our construction provides a bipartite graph.
\begin{claim}
The instance is a bipartite graph.
\end{claim}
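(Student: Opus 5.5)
We will exhibit an explicit two-coloring of all vertices and check, by induction over the steps of the construction, that every edge joins vertices of different colors. The key observation is that each edge set added in a step is the complete bipartite graph between an active partition $A_i$ and its newly arrived set $B_i$. It therefore suffices to maintain the following invariant: \emph{every active partition $A_i$ is monochromatic.}

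The coloring is assigned on arrival. Initially all vertices of $V_0$ form a single active partition and receive color $0$; since $E_0=\emptyset$, the invariant holds and the graph is trivially bipartite. Whenever a set $B_i$ arrives attached to the active partition $A_i$, we give all vertices of $B_i$ the color opposite to the (common) color of $A_i$. Every newly revealed edge lies between $A_i$ and $B_i$, so it is properly colored. Distinct $B_i$'s attached to different partitions receive no edges among themselves, because newly arrived vertices are never adjacent to each other.

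It remains to check that the invariant survives the update of the partition in each case of the recursive step. In the ``aggressive'' case, $A_i$ stays active unchanged, so it is still monochromatic. The new partition $B_i$ is monochromatic by construction, with the opposite color. In the ``conservative'' case, $A_i\setminus C_i$ is a subset of a monochromatic set, and $B_i\cup C_i$ becomes inactive. Inactive vertices never receive further edges, so they impose no future constraints. In the base step, the $B_i$ receive edges only to $A_i$, and then everything is deactivated. By induction, every edge ever revealed joins the two color classes, so $G$ is bipartite with sides given by the colors.

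The only point requiring care is that old edges are never affected by later steps, since colors are fixed at arrival. Thus the whole argument reduces to checking that no edge is ever added inside a color class, and the invariant that active partitions are monochromatic gives exactly this. I do not expect a real obstacle.
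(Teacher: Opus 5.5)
Your proposal is correct and follows the same argument as the paper: color vertices at arrival, maintain the invariant that every active partition $A_i$ is monochromatic, give each $B_i$ the opposite color, and observe that both the aggressive and conservative updates preserve the invariant. Your extra remarks (no edges among simultaneously arriving sets, inactive vertices receive no further edges) are details the paper leaves implicit.
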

\begin{proof}
It suffices to provide a two coloring (white and black) of the vertices at their arrival. At the beginning, we color all vertices of $V_0$ in white. We maintain the invariant that all vertices of an active partition $A_i$ have the same color. At each step, every active partition $A_i$ is associated with a set of vertices $B_i$ arriving online. Since $A_i$ and $B_i$ form a complete bipartite graph, we color the vertices in $B_i$ using the different color of $A_i$. Notice that we either add $B_i$ as a new active partition, or deactivate a subset of vertices from $A_i$. The invariant holds after each step, which concludes the proof.  
\end{proof}

The third observation is that our graph admits a perfect matching.
\begin{claim}
\label{cl:perfect}
There exists a perfect matching in the constructed graph.
\end{claim}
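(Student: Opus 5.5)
The plan is to show that at the end of the construction the inactive set $I$ is the whole vertex set and that $I$ was built as a disjoint union of pieces, each of which admits a perfect matching. Concatenating these perfect matchings then gives a perfect matching of the whole graph. The invariant stated in the construction, that vertices are deactivated in pairs of neighbors, is exactly what needs verifying, so the proof will go through every deactivation event and exhibit the matching explicitly.

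First, I will check that every vertex eventually becomes inactive. At the last step, the base construction deactivates every active partition $A_i$ together with its newly arrived $B_i$. Since every vertex is either in $V_0$ or arrives as part of some $B_i$, and is active upon arrival unless it is immediately deactivated, every vertex ends up in $I$. Moreover, each vertex is deactivated exactly once, because inactive vertices are never reactivated and receive no further edges.

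Next, I will enumerate the deactivation events and show that each event deactivates a set admitting a perfect matching in the graph.
\begin{itemize}
\item \textbf{Base step.} The set $A_i\cup B_i$ is deactivated, where $(A_i,B_i)$ is complete bipartite and $|A_i|=|B_i|$. Hence any bijection between $A_i$ and $B_i$ is a perfect matching of these vertices.
\item \textbf{Conservative case of a recursive step.} The set $B_i\cup C_i$ is deactivated, where $C_i\subseteq A_i$ and $|C_i|=|B_i|$. All edges between $C_i$ and $B_i$ are present, since $B_i$ is completely joined to $A_i$. Again, any bijection between the two sets is a perfect matching.
\item \textbf{Aggressive case.} No vertices are deactivated.
\end{itemize}
One subtle point is that $C_i$ must really be a subset of the current $A_i$ of the right size. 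This holds because $|B_i|=\eps|A_i|\le |A_i|$, and the sizes are integers by the preceding claim on divisibility by $\eps^{-k}$.

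Finally, the deactivated sets from distinct events are disjoint, because each vertex is deactivated once, and together they cover $V$. So the union of the chosen bijections is a matching covering every vertex, that is, a perfect matching with $|V|/2$ edges.

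There is no real obstacle here. The only step requiring care is the bookkeeping showing that every vertex, including those of $V_0$ and those in partitions that persisted through many aggressive steps, is deactivated exactly once by the end. This follows from the base step sweeping up all remaining active partitions.
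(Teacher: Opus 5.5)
Your proposal is correct and follows essentially the same argument as the paper. Each deactivation event removes the two equal-size sides of a complete bipartite graph, either $A_i\cup B_i$ at the last step or $B_i\cup C_i$ in the conservative case; these events are disjoint, and every vertex is inactive at the end, so the union of the per-event matchings is a perfect matching. The paper states this in a few lines, while you spell out the bookkeeping (disjointness, $C_i\subseteq A_i$ of the right integer size, coverage of all vertices) more explicitly.
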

\begin{proof}
We prove that throughout the construction, there exists a perfect matching in the subgraph induced by $I$.  Indeed, our construction deactivates vertices in pairs: either $A_i \cup B_i$ in the last step, or $B_i \cup C_i$ in the recursive steps. At the end, all vertices are inactive, and hence, there exists a perfect matching in the constructed graph after $n$ steps.
\end{proof}

\subsection{Competitive Analysis}
We have specified the behavior of the adversary. Thus, for an arbitrary algorithm $\alg$, the resulting instance is captured by the three parameters $(n,N,x)$.
Let $I$ be the final set of vertices of the graph of $\alg$. Formally, $I$ is a function of $(\alg,n,N,x)$. We omit the dependency for notation simplicity.

Then we have that the optimal matching has size $\frac{1}{2}\cdot |I|$ by Claim~\ref{cl:perfect}.
And the total matching produced by the algorithm (including the $xN$ fractions from the initialization step) is $\frac{1}{2} \cdot \sum_{v \in I} x_v$, where $x_v \in [0,1]$ is the total matched portion of $v$.
We use $v_\alg(n,N,x)$ to denote the difference between the size of the matching produced by the algorithm, and $\Gamma$ times the size of the maximum matching:
\[
v_\alg(n,N,x) := \frac{1}{2} \cdot \left( \sum_{v \in I} x_v - \Gamma \cdot |I| \right)~.
\]
Notice that if $\alg$ is $\Gamma$-competitive, it should have the property that $v_\alg(n,N,0) \ge 0$.
Next, we prove the main lemma.
\begin{lemma}
\label{lem:construction}
For any algorithm $\alg$, $v_\alg(n,N,x) \le F_n(x) \cdot N$.	
\end{lemma}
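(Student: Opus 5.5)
We prove the bound by induction on $n$, treating each active partition separately. Since different active partitions never share edges with each other or with inactive vertices afterwards, the adversary's construction on an active partition $A_i$ with $k$ remaining steps and average matched portion $x_{A_i}$ is exactly an instance of type $(k,|A_i|,x_{A_i})$. The algorithm's behaviour there is simply some algorithm for that sub-instance. The internal distribution of matched portions in $A_i$ plays the role of the initialization vector. Hence the total value $v$ decomposes as the sum of contributions of vertices deactivated now plus the values of the sub-instances on the surviving active partitions. We may therefore argue for a single partition $V_0$ of size $N$ with average level $x$, and use the inductive hypothesis $v_{\alg'}(n-1,N',x') \le F_{n-1}(x')N'$ for every algorithm $\alg'$ on the sub-instances.

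\textbf{Base case $n=1$.} $B$ arrives with $|B|=N$, forming $K_{N,N}$ with $V_0$. Suppose the algorithm adds $aN$ fractional edges with $a\le 1$, and feasibility gives $x+a\le 1$. Then all $2N$ vertices become inactive with total matched portion $xN+2aN$. So $v=\frac12(xN+2aN-2\Gamma N)=N(\frac x2+a-\Gamma)\le N(\frac x2+1-x-\Gamma)=F_1(x)N$.

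\textbf{Inductive step.} Let $a$ be the algorithm's choice, so $x+\eps a\le 1$ and $a\le 1$; $a$ is thus a feasible argument in the definition of $F_n$. In the aggressive case, the two active partitions are $V_0$ (size $N$, new level $x+\eps a$) and $B$ (size $\eps N$, level $a$). The sub-instances have $n-1$ remaining steps, and no vertices are deactivated now. Induction gives $v\le F_{n-1}(x+\eps a)N+\eps F_{n-1}(a)N$, which is the first term and, by the case condition, equals the minimum.

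In the conservative case, $B\cup C$ is deactivated. Here $C$ consists of the $\eps N$ least matched vertices of $V_0$, whose total portion is at most $\eps N(x+\eps a)$ because their average is at most the overall average $x+\eps a$. The vertices of $B$ have total portion $a\eps N$. The remaining partition $V_0\setminus C$ has size $(1-\eps)N$ and average level $x''\ge x+\eps a$. Its contribution is at most $F_{n-1}(x'')(1-\eps)N$ by induction, and $\frac12(\text{portion}(C))-\ldots$ accounts for the rest. The main obstacle is that $C$ having low portion pushes $x''$ up, so the bound must be settled by trading off. Total mass is conserved: $\text{portion}(C)+(1-\eps)N x''=N(x+\eps a)$. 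Writing $\text{portion}(C)=\eps N(x+\eps a)-(1-\eps)N\delta$ with $x''=x+\eps a+\delta$, the value is at most
\[
(1-\eps)N\Big(F_{n-1}(x+\eps a+\delta)-\tfrac{\delta}{2}\Big)+\tfrac12\eps N\big(x+\eps a+a\big)-\eps\Gamma N .
\]
By Claim~\ref{cl:lipschitz}, $F_{n-1}(x+\eps a+\delta)-\frac\delta2\le F_{n-1}(x+\eps a)$ when $\delta\ge 0$. This yields exactly the second term $(1-\eps)F_{n-1}(x+\eps a)+\eps\big(\frac{(1+\eps)a+x}{2}-\Gamma\big)$ times $N$, which is the minimum in this case.

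In both cases $v\le N\min\{\cdot,\cdot\}\le F_n(x)N$, since $a$ is feasible in the max defining $F_n$. The divisibility claim guarantees that the sub-instances are valid, with sizes that are multiples of $\eps^{-(n-1)}$.
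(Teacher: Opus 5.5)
Your proof is correct and follows essentially the same argument as the paper. The steps coincide: induction on $n$, the base-case bound $a\le 1-x$, the split into the two sub-instances in the aggressive case, and in the conservative case mass conservation together with $x_{A\setminus C}\ge x+\eps a$ and the $\tfrac12$-Lipschitz property (your $\delta$-parametrization is just a rewriting of the paper's computation), with you being slightly more explicit about why $a$ is feasible in the max defining $F_n$ and why the sub-instance sizes are admissible.
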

\begin{proof}
We prove the statement by induction on $n$.

The base case is when $n=1$. Let $V_0$ be the set of initial vertices and $B$ be the set of the newly arrived vertices. According to our construction, $|B|=|V_0|=N$ and hence $|I| = 2N$ after the arrival of $B$. Observe that the algorithm can match at most $(1 - x) \cdot N$ fractional edges between $V_0$ and $B$, so that the average matched portion of the initial vertices $x_{V_0}$ is at most $1$. Thus,
\[
v_\alg(1,N,x) \le \nicefrac{1}{2} \cdot x\cdot N + (1-x)\cdot N -\Gamma \cdot N = (1-\Gamma - \nicefrac{x}{2}) \cdot N = F_1(x) \cdot N~,
\]
where the $\nicefrac{1}{2}\cdot x \cdot N$ term comes from the initial matched portion of $V_0$.

Next, we assume the statement holds for $n\ge 1$, and consider the case with $n+1$ steps. 
Let $V_0$ be a set of $N \ge \eps^{-(n+1)}$ isolated vertices whose average matched portion being $x$.
According to our construction, all these vertices belong to the same partition of active vertices $A$, and a corresponding set of vertices $B$ arrive online, where $A$ and $B$ forms a complete bipartite graph and $|B|=\eps|A|$.
Suppose the algorithm matches $a \cdot |B|$ fractional edges. 
We shall calculate the size of the matching produced by the algorithm vertex-by-vertex. Moreover, we do it in a lazy manner in the sense that we only count the contribution of a vertex at the moment when it becomes inactive. We also update the size of the optimal matching only when vertices become inactive.

Our construction continues depending on the two cases.

\paragraph{Against Aggressive Algorithms.}
According to our construction, we add $B$ as an active partition and recurse on $A$ and $B$ respectively. 
Let $\alg_A$ and $\alg_B$ be the two algorithms that start with the configuration $(n,N,x_A=x+\eps a)$ and $(n,\eps N, x_B=a)$, and follow the matching decision as $\alg$. Then we have,
\begin{equation}
v_\alg(n+1,N,x)= v_{\alg_A}(n,N,x+\eps a) + v_{\alg_B}(n,\eps N, a) \le F_n(x+\eps a)\cdot N + F_n(a)\cdot \eps N~,
\label{eqn:aggressive}
\end{equation}
where the inequality holds by induction hypothesis.

\paragraph{Against Conservative Algorithms.} Let $C$ be the $|B|$ least matched vertices of $A$. According to our construction, we deactivate $C\cup B$ and recurse on the active set $A\setminus C$. Therefore,
\[
v_\alg(n+1,N,x) = v_{\alg'}(n,(1-\eps)N,x_{A\setminus C}) + \left( \frac{x_C+a}{2} - \Gamma \right) \cdot \eps N~,
\]
where $\alg'$ denotes the algorithm that starts with the configuration $(n,(1-\eps)N,x_{A\setminus C})$, and follows the same matching decision as $\alg$. 

Applying induction hypothesis to $\alg'$, we have that
\[
v_{\alg'}(n,(1-\eps)N,x_{A\setminus C}) \le F_n(x_{A\setminus C}) \cdot (1-\eps)N~.
\]
Consequently, we have that
\begin{align}
& v_{\alg}(n+1,N,x) \le F_n(x_{A\setminus C}) \cdot (1-\eps)N + \left( \frac{x_C + a}{2} - \Gamma \right) \cdot \eps N \notag \\
& \le \left( F_n(x+\eps a) + \frac{1}{2} \cdot (x_{A\setminus C} - x - \eps a) \right) \cdot (1-\eps) N + \left( \frac{x_C + a}{2} - \Gamma \right) \cdot \eps N \notag \\
& = F_n(x+\eps a) \cdot (1-\eps)N + \frac{1}{2} \cdot \left( x_{A\setminus C} \cdot (1-\eps)N + x_C \cdot \eps N - (x+\eps a) \cdot (1-\eps) N \right) + \left(\frac{a}{2} - \Gamma \right) \cdot \eps N \notag \\
& = F_n(x+\eps a) \cdot (1-\eps)N + \frac{1}{2} \cdot \left( (x+\eps a)N - (x+\eps a) \cdot (1-\eps) N \right) + \left(\frac{a}{2} - \Gamma \right) \cdot \eps N \notag \\
& = \left( F_n(x+\eps a) \cdot (1-\eps) + \left( \frac{x+(1+\eps)a}{2} - \Gamma \right) \cdot \eps \right) \cdot N~.
\label{eqn:conservative}
\end{align}
Here, the second inequality holds by the Lipschitzness of $F_n$ (Claim~\ref{cl:lipschitz}) and the fact that $x_{A\setminus C} \ge x+\eps a$, since $C$ contains the least matched vertices of $A$; the second equality holds by the fact that $x_{A\setminus C} \cdot (1-\eps)N + x_C \cdot \eps N$ is the total matched portion of $A$, which equals $(x+\eps a)\cdot N$.

Recall that we diverge into the two cases by comparing \eqref{eqn:aggressive} and \eqref{eqn:conservative}, and apply the construction with the smaller value. This concludes the proof:  
\begin{align*}
v_\alg(n+1,N,x) & \le \min \bigg\{F_n(x+\eps a) \cdot (1-\eps) + \left(\frac{x+(1+\eps)a}{2} - \Gamma \right) \cdot \eps, \\
& \phantom{\le \min \bigg\{} \  F_n(x+\eps a) + \eps F_n(a) \bigg\} \cdot N\le F_{n+1}(x) \cdot N~.		
\end{align*}
\end{proof}

\section{Proof of Theorem~\ref{thm:main}}
\label{sec:main}
Finally, we put things together to complete the proof of our main theorem. Fix an $\eps>0$ so that $\eps^{-1}$ is an integer, and $\Gamma > \Gamma^*+\eps$. 
According to Lemma~\ref{lem:math}, there exists $n$ such that $F_n(0) <0$. 

Consider an instance with $N = \eps^{-n}$ isolated vertices arrived in a sequence at the beginning. There is no decision for the algorithm to make. 
By Lemma~\ref{lem:construction}, for any algorithm $\alg$, we have that
\[
v_\alg(n,N,0) \le F_n(0) \cdot N < 0~.
\]
That is, the fractional matching produced by $\alg$ is strictly smaller than $\Gamma$ times the size of the optimum matching, by the definition of $v_\alg$. In other words, no algorithm has a competitive ratio strictly greater than $\Gamma^* + \eps$. As this holds for every $\eps>0$, no algorithm can achieve a competitive ratio strictly greater than $\Gamma^*$. This concludes the proof of the theorem. 
\newpage

\bibliographystyle{plain}
\bibliography{matching}

\end{document}